\documentclass{article}

\usepackage[utf8]{inputenc}
\usepackage{hyperref}
\usepackage{adjustbox}
\usepackage{url}
\usepackage{booktabs}
\usepackage{amsfonts}
\usepackage{microtype}
\usepackage{tikz}
\usepackage{amsthm}
\usepackage{amsmath}
\usepackage{algorithm}
\usepackage{algorithmic}
\usepackage{graphicx}
\usepackage{subcaption}
\DeclareMathOperator*{\argmax}{arg\,max}
\theoremstyle{definition}
\newtheorem{proposition}{Proposition}
\newtheorem{theorem}{Theorem}

\newtheorem{definition}{Definition}


\usepackage[final]{dai_2019} 

\begin{document}

\title{On Maximizing Egalitarian Value in K-coalitional Hedonic Games
}


 \author{
   Naftali Waxman
   \\ 
   Bar-Ilan University\\ 
   Israel \\
   \texttt{vaxmann@cs.biu.ac.il} \\
   \And
   Sarit Kraus
   \\ 
   Bar-Ilan University\\ 
   Israel \\
   \texttt{sarit@cs.biu.ac.il} \\
   \And
   Noam Hazon
   \\ 
   Ariel University\\ 
   Israel \\
   \texttt{noamh@ariel.ac.il} \\
 }

\maketitle 

\begin{abstract}
This paper considers the problem of dividing agents among coalitions. We concentrate on Additively Separable Hedonic Games (ASHG's), in which each agent has a non-negative value for every other agent and her utility is the sum of the values she assigns to the members of her coalition. 
Unlike previous work, we analyze a model where exactly $k$ coalitions must be formed, and the goal is to maximize the utility of the agent which is worst off, i.e., the egalitarian social welfare. We show that this problem is hard, even when the number of agents should be equally divided among the coalitions. We thus propose a heuristic that maximizes the egalitarian social welfare and maximizes the average utility of each agent as a secondary goal. Using extensive simulations, both on synthetic and real data, we demonstrate the effectiveness of our approach. Specifically, our heuristic provides solutions that are much fairer than the solutions that maximize the average social welfare, while still providing a relatively high average social welfare.
\end{abstract}

\section{Introduction}

Coalition formation is one of the fundamental research problems
in multi-agent systems \cite{chalkiadakis2011computational}. 
Broadly speaking, coalition formation is concerned with partitioning
a population of agents into disjoint teams (or coalitions) with the aim that some
system-wide performance measure is maximized or that the selected partition
is stable according to some stability measure.

One of the main classes of coalition formation games is Hedonic Games
\cite{bogomolnaia2002stability, banerjee2001core}. In these games, each agent’s 
utility solely depends on the other agents within her coalition and not
on how other agents are partitioned. There is an important subclass of Hedonic Games, which is called
Additively Separable Hedonic Games (ASHG's) \cite{aziz2013computing, aziz2011stable}. 
In these games each agent has a value for each of the other agents, and her utility
in a given coalition formation is the sum of the values she assigns to the members of
her coalition. For example, consider the problem of dividing
students into classes. In this case each student may specify her friends, and her utility is the number of friends she has within the class to which she is assigned. Another example is of assigning agents to complete tasks. Each agent can specify how well she works with other agents, and her utility is the sum of the values she specified to the members of her coalition.

The usual objective in ASHG's is to maximize the sum of individual utilities of the agents, i.e., the utilitarian social welfare. In addition, one assumption made in most coalition formation games is that any number of coalitions can be formed. 
However, it is sometimes more important to maximize the utility of the agent that is worst off, i.e. the egalitarian social welfare, and in many real-world problems exactly $k$ coalitions are required. For example, when dividing students into classes there is a known number of classes that have to be formed, and this number cannot be modified. There is no way to form more or less classes than agreed upon by the school and the administration. Moreover, a partition which yields an average of $3$ friends per student but leaves a couple with no friends at all would be considered by many to be worse than a partition with an average of $2.5$ friends per student in which everyone has at least one friend.


In this paper we study ASHG's with non-negative utilities, where the objective is to maximize the egalitarian social welfare while exactly $k$ coalitions are allowed to be formed. We first show that maximizing the egalitarian social welfare is hard even for fixed $k$ and equally sized coalitions. However, for simple games (see the definition below), finding a coalition structure with an egalitarian value of $1$ can be done in polynomial time, if one exists. Similarly, we show that finding a coalition structure with egalitarian value of $2$ can be done in polynomial time, if the game is also symmetric and has a specific structure.
Unfortunately, we show that when maximizing the egalitarian social welfare the loss in the utilitarian social welfare is unbounded.

As noted before, our problem is well-motivated by real world scenarios. 
We thus propose a simulated annealing heuristic that maximizes the egalitarian social welfare. We also propose a variant of hill climbing, denoted \emph{LexiClimb}, which maximizes the the egalitarian social welfare and the average utility of each agent as a secondary goal. 
For evaluating the heuristics we used real preferences of students from $3$ schools as well as synthetic data, and our extensive simulations demonstrated the effectiveness of LexiClimb. Specifically, LexiClimb provides solutions that are much fairer than the solutions that maximize the average social welfare, while still providing a relatively high average social welfare.


\section{Related Work}
There are few works that consider the objective of maximizing the egalitarian social welfare of a coalition structure. Skibski et al.~\cite{skibski2016k} propose to study the egalitarian social welfare in general non-transferable utility games. 
In the domain of ASHG's, Peters \cite{peters2016graphical} showed that finding a maximum egalitarian partition is polynomial time tractable, if the game is symmetric and its graphical representation is of bounded tree width. Aziz et al.~\cite{aziz2013computing} considered arbitrary utilities, and showed that 
finding a maximum egalitarian partition is NP-hard, 
and verifying the existence of such partition is coNP-complete. All of these works assume that every coalition structure is feasible, while we assume that exactly $k$ coalitions are allowed to be formed. 
%

The restriction on the number of coalitions has been mostly overlooked. Indeed, Skibski et al.~\cite{skibski2016k} study $k$-coalitional cooperative games under  the transferable utility setting, and develop a dedicated extension of the Shapley value for this game.
Sless et al.~\cite{sless2014forming, sless2018forming} initiated the study of ASHG's where exactly $k$ coalitions must be formed.
However, their goal is to to maximize the utilitarian
social welfare.

When looking at the decision problem, whether a coalition formation of egalitarian value of at least $m$ 
exists, there are some tractable instances or guaranteeing constraints. For every ASHG there is an equivalent representation as a weighted graph, and finding a coalition structure is therefore equivalent to finding partition of the graph. Hence, a notable result by Stiebitz
\cite{stiebitz1996decomposing} shows that every simple symmetric ASHG can be partitioned to a high egalitarian value coalition structure if the original minimum degree of its graphical representation is high enough. However, this result is not applicable for ASHG's which are not symmetric \cite{alon2006splitting}. 
Bang et al.~\cite{bang2019undirected} proved that in the general case deciding whether a simple symmetric ASHG has a 2-coalitions coalition structure with egalitarian values of at least $(\delta_1, \delta_2)$ is generally hard, except for some special cases regarding $(\delta_1, \delta_2)$ and the graphical representation's minimum degree. For asymmetric case the same problem was proved to be polynomial-time solvable by \cite{bang2016finding} for the case of $\delta_1 = \delta_2 = 1$ and was proven to be NP-hard recently by \cite{bang2018directed}for any higher values.

We later focus on heuristics aiming at maximizing the egalitarian social welfare of a $k$-coalitions coalition structure of an ASHG, while constraining the sizes of each coalition, as well as maximizing the utilitarian social welfare as a secondary goal. The problem of maximum utilitarian partition of roughly equally sized partitions is a known problem called Graph Partitioning \cite{bulucc2016recent} and is proven to be NP-hard, while having very good practical 
algorithms.

Another related problem with a similar goal is the Satisfactory Partition Problem \cite{shafique2002satisfactory}. In this problem the goal is to find a 2-partition of a graph where each vertex has at least as many neighbours in his subgraph as in the other subgraph. A generalization of this problem for k-partitions was introduced by \cite{bazgan2010satisfactory}, where they proved hardness for the different variations of the problem.

\section{Preliminary Definitions}
A hedonic coalition formation game is given by a tuple $(N, \succeq)$ where $N = \{ 1, \dots n \}$ is a finite, non-empty set of players and $\succeq = (\succeq_1, \dots, \succeq_n)$ is a preference profile, specifying for each player $i \in N$ his relation $\succeq_i$ over the set $\mathcal{N}_i = \{ C \subseteq N | i \in C \}$. We say that player $i$ prefers (strictly prefers) coalition $C$ over coalition $C^{'}$ if it holds that $C \succeq_i C^{'} \  (C \succ_i C^{'})$. A solution for a hedonic game is a partition $\pi$ of $N$, also called a coalition structure (CS). We will use the notion $\pi (i)$ to denote the coalition in $\pi$ that includes player $i$, and $P$ to refer to all possible partitions.

$K$-coalitional games are games where exactly $K$ coalitions must be formed.

An additively separable hedonic game (ASHG), $(N, \succeq)$, is a hedonic game where each player $i$ has a value $v_i (j)$ for each other player $j$ being in the same coalition as hers, and the utility of agent $i$ being in a coalition $C \in \mathcal{N}_i$ is the sum $v_C (i) = \sum_{j \in S\setminus\{i\}} v_i (j)$. We say an ASHG is simple if for any two players $i,j$ it holds that $v_i (j) \in \{0,1\}$, and is symmetric if $v_i(j) = v_j(i)$.

The utilitarian social welfare of a partition $\pi$ is defined as the sum of individual utilities of all of the agents. In this paper we will often use the average utility of all players as an indicator for the utilitarian social welfare as they are linearly correlated, and in our work we want to put the perspective on the individual.
The egalitarian social welfare of a partition is given by the utility of the agent that is worst off. Formally we have: 
$$U(\pi) = \sum_{i\in N} v_\pi (i) / n$$ 
$$E(\pi) = \min \{v_\pi (i) | i\in N\}$$ 

A maximum egalitarian coalition structure is a coalition structure that maximizes the egalitarian social welfare compared to all other possible coalition structures. Formally: 
$$\argmax_{\pi \in P}{E(\pi)} $$
The maximum egalitarian $k$-coalition CS is defined similarly, as a CS that maximizes the egalitarian social welfare among all other $k$-coalition CS's.

We define the maximum utilitarian ($k$-coalition) coalition structure the same way except with the utilitarian social welfare.

A $(m_{1},m_{2},\dots,m_{k})$-partition is a $k$-partition where each coalition $C_{1},C_{2},\dots,C_{k}$ is at least of size $m_{1},m_{2},\dots,m_{k}$ respectively. An equal sized $k$-partition is a partition into $k$ coalitions where each is of size $n/k$. We will look at a case where roughly equally sized coalitions are required, and for that we use the same definition from Graph Partition: A $(k,1+\epsilon)$-partition is a partition into $k$ coalitions where each coalition is at maximum of size $(n/k)*(1+e)$.

The main problem we discuss in the paper is finding the coalition structure that maximizes the egalitarian value. Formally we define the decision problem as follows: 
\begin{definition}{\textsc{DecisionEgalitarianCS}:}
Given an additively separable hedonic game $G = (N, \succeq)$, and positive integers $k, \delta$. 
Decide whether there exists a $k$-CS of $G$ of egalitarian value of at least $\delta$.
\end{definition}
The equal variation \textsc{DecisionEgalitarianEqualCS} is the same problem, 
but with the restriction that we look only for equal sized $k$ coalition structures.

ASHGs can be represented as graphs, hence we use notions of graphs in this paper in many cases instead of
those of game theory; Vertices and edges represent the players and their utility functions respectively. 
A $k$-coalition structure is equivalent to a $k$-cut, and the total weight and the minimum weight are the
same as the utilitarian social welfare and the egalitarian social welfare, respectively. For the minimum out-weight of a digraph
$D = (V,E)$ we will use the notion $\delta(D)$ and the total weight is simply $|E|$. In the case of simple
ASHGs we can use the degrees instead of the weights, and in the case of symmetric ASHGs an undirected graph
instead of a graph. Most of the other definitions also follow through very clearly.

In this paper we will talk only on digraphs, unless specifically stated otherwise.
\section{Theoretical remarks and observations}

\subsection{Equally Sized Coalitions}
Bang et al \cite{bang2018directed} proved that \textsc{DecisionEgalitarianCS} $\delta \geq 2$. We prove that the equal sized version of the problem is also hard:
\begin{theorem}
\textsc{DecisionEgalitarianEqualCS} is NP-complete.
\end{theorem}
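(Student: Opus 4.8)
The plan is to first settle membership in NP and then establish hardness by reducing from the non-equal problem \textsc{DecisionEgalitarianCS}, which is already NP-hard for $k=2$ and $\delta\ge 2$ \cite{bang2018directed}. Membership is immediate: a partition $\pi$ serves as a certificate, and one checks in polynomial time that $\pi$ consists of exactly $k$ coalitions, that each has size $n/k$, and that $v_\pi(i)\ge\delta$ for every agent $i$.

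For hardness, I would take an instance of \textsc{DecisionEgalitarianCS} given by a digraph $G$ on $n$ vertices with parameters $k$ and $\delta$, and build an equal-sized instance $G'$ by keeping all original vertices and their arcs and adding a pool of $(k-1)n$ \emph{dummy} agents. The single design choice that makes everything work is the orientation of the dummy arcs: every dummy values every original agent at $1$ and every other dummy at $0$, while no original agent values any dummy (value $0$). Thus an original agent's utility is unchanged — it still equals the number of its original out-neighbours inside its coalition — whereas a dummy's utility equals exactly the number of original agents sharing its coalition. I would set the target coalition size to $s=n$, so $G'$ has $kn$ agents to be split into $k$ coalitions of size $n$.

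Correctness then splits into two directions. Given a yes-solution of the source with coalitions of sizes $a_1,\dots,a_k$ summing to $n$ (each $a_i\ge\delta+1$, since every original needs $\delta$ original out-neighbours), I pad coalition $i$ with $s-a_i=n-a_i$ dummies; this consumes exactly $\sum_i(n-a_i)=(k-1)n$ dummies, equalises all sizes to $n$, leaves every original's utility $\ge\delta$, and gives each dummy utility $a_i\ge\delta$, so $E(\pi')\ge\delta$. Conversely, given an equal-sized $k$-CS of $G'$ with $E\ge\delta$, the crucial observation is that no coalition can consist solely of dummies: a dummy there would have utility $0<\delta$. Hence each of the $k$ coalitions contains at least one original agent, and deleting the dummies yields a partition of the original vertices into exactly $k$ nonempty coalitions in which every original still has out-degree $\ge\delta$ — a yes-solution of \textsc{DecisionEgalitarianCS}.

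The step I expect to be the main obstacle is precisely this last point: guaranteeing that the padding cannot be abused to collapse the originals into fewer than $k$ groups (which would amount to solving the source with the wrong number of coalitions). Making the dummies point only at originals, and exploiting directedness so that originals ignore dummies, is exactly what forbids all-dummy coalitions and rigidly preserves $k$; this is also why the reduction inherits hardness for fixed $k$. As a sanity check for the case $\delta=2$, I note that an even more elementary reduction is available directly from \textsc{Partition Into Triangles}: the simple symmetric game on the input graph admits an equal-sized $k$-CS into size-$3$ coalitions with $E\ge2$ if and only if the graph decomposes into triangles.
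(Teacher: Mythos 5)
Your proof is correct and uses essentially the same reduction as the paper: pad the \textsc{DecisionEgalitarianCS} instance with $(k-1)n$ dummy agents who value all original agents at $1$ (with no reciprocal value), force equal coalitions of size $n$, and observe that no coalition can be all-dummy while originals' utilities are untouched by deleting dummies. Your write-up is in fact slightly more complete than the paper's, since you also verify NP membership explicitly and note the triangle-partition sanity check, but the core argument is the same.
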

\begin{proof}
We use a reduction from the general case. We show that if there is a polynomial-time algorithm $A^{'}$ that solves \textsc{DecisionEgalitarianEqualCS}, there is also a polynomial-time algorithm $A$ that solves \textsc{DecisionEgalitarianCS}.

Let us assume there exists such an algorithm $A^{'}$. Let $G$ be a simple ASHG with $n$ players and let $G, k, \delta$ be an input for a \textsc{DecisionEgalitarianCS} problem. We use the following algorithm $A$ to solve it: 
\begin{enumerate}
  \setcounter{enumi}{0}
  \item Create a game $G^{'}$ which is $G$ with $n(k-1)$ new players, each of which has all original players as neighbours.
  \item Run $A$ on $G^{'}, k, \delta$, and returns the answer of $A$.
\end{enumerate}
First let's assume that $A$ returns \textit{yes} on $G^{'}, k, \delta$. 
Since all new players' utilities are from the original players,
original players must be partitioned among all the coalitions in the coalition structure of $A$.
We can take the same coalition structure, remove all new players and get a coalition structure of
exactly $k$ coalitions. Also, the original players' utilities haven't changed, so by removing the
new players we haven't changed their utilities. Hence the egalitarian value is still at least $\delta$.
Secondly, let's assume that there is a $k$ - coalition structure in the original
game of egalitarian value of at least $\delta$, then $A$ returns \textit{yes}.
We can simply add players to any of the current coalitions, up to $n$ players in each coalition.
The new coalition structure is obviously of exactly $k$ equally sized coalitions. 
If we run this algorithm we get the maximum egalitarian 2-CS: Let's suppose the maximum egalitarian value possible is $m$ and that $A^{'}$ returns $m^{'}$. Note that $m$ cannot be bigger than $n/2$ as at least 1 of the coalitions is at most of size $n/2$. We can take a 2-partition with an egalitarian value $m$ and add $n-2$ players in such a way as to obtain 2 equally sized coalitions of $n-1$ players each. Since the new $n-2$ players have everyone as friends they each have a utility of $n-2$, and the egalitarian value stays $m$. Now suppose $A^{'}$ returns $m^{'}$, then we can take away the $n-2$ new players and still have $m^{'}$ as an egalitarian value. This is true because, just like for $m$, there is a coalition in $G^{'}$ with at most $n/2$ from $G$, hence there is a player with a utility of at most $n/2-1$, which is smaller than $n-2$. So we get $m \geq m^{'}$, and with the result from above $m = m^{'}$.
\end{proof}

\subsection{Egalitarian Value of at Least 1}
In 2016 \cite{bang2016finding} proved that checking whether a digraph has a partition into 2 subgraphs of sizes at least $(m_1,m_2)$ such that each of them has an out-degree of  least 1 can be done in polynomial time if $(m_1,m_2)$ are fixed. We prove that the same can be done for a $k$-partition of fixed sizes $m_1, m_2, \dots, m_k$ for a fixed $k$.

\begin{theorem}
Let $D$ be a subdigraph, $k>2$, and let $(m_1, m_2, \dots, m_k)$ be positive integers. Deciding whether $D$ has a $(m_1, m_2, \dots, m_k)$-partition with out-degree of at least 1 is polynomial-time solvable for a fixed $k$.
\end{theorem}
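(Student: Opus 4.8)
The plan is to reduce everything to the known $k=2$ result \cite{bang2016finding} by first establishing a structural reformulation of what a single valid part looks like, and only then inducting on $k$. First I would record the basic facts that drive the argument: a vertex set $S$ induces a subdigraph of minimum out-degree at least $1$ exactly when every vertex of $S$ can pick an out-neighbour \emph{inside} $S$. Choosing one such out-neighbour for each vertex produces a spanning functional subdigraph $F$ in which every vertex has out-degree exactly $1$; call its weakly connected components \emph{blobs}. Three observations follow immediately: each blob is itself valid (it has no sink), the minimal valid vertex sets are precisely the vertex sets of directed cycles, and a disjoint union of valid sets is again valid (a vertex keeps its in-part out-neighbour). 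Consequently a valid $(m_1,\dots,m_k)$-partition exists if and only if $D$ has no global sink and the blobs of some $F$ can be distributed into $k$ groups so that group $i$ receives total size at least $m_i$. The grouping is automatically valid, and any leftover blobs may be dumped into a single group because only lower bounds are imposed. This equivalence is the conceptual core, since it cleanly decouples \emph{validity}, which the blob structure supplies for free, from the purely combinatorial \emph{size} requirement.

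With this reformulation in hand I would argue by induction on $k$, taking the $k=2$ algorithm of \cite{bang2016finding} as the base case (a global sink is detected in the same step). For the inductive step it suffices to decide whether there is a valid part $W_k$ with $|W_k|\ge m_k$ whose removal leaves a digraph that still has no sink and admits a valid $(m_1,\dots,m_{k-1})$-partition; I would combine the base-case machinery, which already copes with arbitrarily structured parts, with the blob bookkeeping above, and accept if and only if some admissible peeling succeeds. Because $k$ and the thresholds $m_1,\dots,m_k$ are fixed constants, the total amount of size that must be "reserved" to meet the lower bounds is bounded, so the distribution of blobs among the $k$ groups becomes a bin-covering computation over a constant number of bins, which is polynomial. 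The number of recursive levels is the constant $k$, so the overall running time stays polynomial.

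The hard part will be controlling the interaction between the covering requirement and validity, rather than either one in isolation. Removing a part $W_k$ can turn leftover vertices into sinks in the remainder, and, when the digraph has large girth, a valid part may be forced to be large, so one cannot simply enumerate short cyclic seeds to build each part. The blob reformulation is exactly what I would lean on to tame both issues: leftover vertices always arrive bundled inside valid blobs rather than as arbitrary singletons, so no spurious sinks are ever created by merging, and the peeling of a possibly long part is delegated to the base-case algorithm instead of to naive enumeration. What then remains is to make the size accounting across the $k$ groups and the recursive peeling simultaneously correct, and to verify that fixing $k$ and the $m_i$ keeps the bin-covering step polynomial; this simultaneous correctness, not any single sub-step, is where the real care is needed.
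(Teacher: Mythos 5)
Your blob equivalence is true: a valid $(m_1,\dots,m_k)$-partition exists iff $D$ has no sink and the blobs of \emph{some} spanning functional subdigraph $F$ can be grouped to meet the size thresholds. But this reformulation is not an algorithm, and the step where you lean on it --- ``the distribution of blobs among the $k$ groups becomes a bin-covering computation over a constant number of bins'' --- only makes sense \emph{after} $F$ has been fixed. The blobs are not a property of $D$; they depend on which out-edge each vertex selects, there are $\prod_v d^+(v)$ (exponentially many) functional subdigraphs, and each induces a different multiset of blob sizes. The exponential search over partitions has merely been renamed as an exponential search over choices of $F$, and nothing in your proposal bounds it. Indeed, even the threshold-free question ``does some $F$ have at least $k$ blobs?'' is equivalent to ``does $D$ contain $k$ vertex-disjoint directed cycles?'', which is already a nontrivial algorithmic problem that your bookkeeping silently absorbs. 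The same gap reappears in your inductive step: ``there is a valid part $W_k$ with $|W_k|\ge m_k$ whose removal leaves a digraph admitting a valid $(m_1,\dots,m_{k-1})$-partition'' is essentially the definition of the problem, not a reduction of it; there can be exponentially many candidate parts $W_k$, of unbounded size, and which one you peel determines whether the remainder is feasible. The $k=2$ algorithm of \cite{bang2016finding} cannot serve as the peeling engine: it is a decision procedure for 2-partitions, it does not enumerate candidate parts, and it gives no control over which valid set is removed. ``Accept if and only if some admissible peeling succeeds'' is exactly the step that needs a polynomial-size search space, and none is supplied.

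The missing idea is the size-bounding dichotomy that drives both the $k=2$ result and the paper's induction: in any valid partition, part $i$ either contains a directed cycle of length at least $m_i$, or it contains a valid subset of size between $m_i$ and $2m_i-2$ (grow a set from one short cycle of the part, adding either a single vertex whose chosen out-neighbour is already inside, or one whole short cycle, and stop the first time the size reaches $m_i$). This lemma is what collapses the search: small valid sets can be enumerated in $n^{O(m_i)}$ time since the $m_i$ are fixed, while the ``long'' parts are certified by vertex-disjoint cycles of prescribed minimum lengths, which can be detected in polynomial time for fixed $k$ (this is the subdigraph/subdivision-finding step in the paper). Your draft explicitly disavows this route (``one cannot simply enumerate short cyclic seeds'') but puts nothing in its place, so no polynomial running time is established. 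One further caution should you repair the argument: peeling a \emph{small} valid set $S$ that is a proper subset of a part and then demanding that $D-S$ admit a full valid $(k-1)$-partition is unsound, because vertices of that part whose only out-neighbours lie in $S$ become sinks of $D-S$. The clean fix --- which is how the $k=2$ characterization of \cite{bang2016finding} works, via its case of two disjoint small valid sets --- is to keep all $k$ seeds (long cycles and small valid sets) mutually disjoint and to absorb \emph{all} leftover vertices of $D$ only at the very end, routing each leftover along its out-path into whichever seed's part it reaches; that absorption step is where your blob observation genuinely earns its keep.
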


\begin{proof}
We show an inductive algorithm which solves this problem that runs in polynomial time. For the case of $k=2$ we use the fact that a graph satisfies the requirement if and only if one of the follow holds \cite{bang2016finding}:
\begin{enumerate}
  \setcounter{enumi}{0}
  \item $D$ has two disjoint directed cycles $C_1$ and $C_2$ of length at least $m_1$ and $m_2$ respectively.
  \item $D$ has a subdigraph $D^{'}_{1}$ such that $\delta (D^{'}_{1}) \geq 1,\  m_1 \leq |D^{'}_{1}| \leq 2m_1 -2$, and $D - D^{'}_{1}$ has a directed cycle of length at least $m_2$.
  \item $D$ has a subdigraph $D^{'}_{2}$ such that $\delta (D^{'}_{2}) \geq 1,\  m_2 \leq |D^{'}_{2}| \leq 2m_2 -2$, and $D - D^{'}_{2}$ has a directed cycle of length at least $m_1$.
  \item $D$ has two disjoint subdigraphs $D^{'}_{1}, D^{'}_{2}$ such that $\delta (D^{'}_{i}) \geq 1,\  m_i \leq |D^{'}_{i}| \leq 2m_i -2$ for $i = 1,2$.
\end{enumerate}
Each of the above can be checked in polynomial time.
In short, (1) is true because checking if a disjoint union of directed cycles are a subdigraph of a given digraph
can be done in polynomial time as it is a planar graph with no in-degree or out-degree of 2 or more.
(2) is true because checking whether a digraph $S$ of a fixed size has an out-degree at least 1 can be done in constant time, there are polynomial possible sets of size $m_1$, and checking the existence of a directed cycle in $D-S$ can be done in polynomial time as mentioned above. (3) and (4) follow simply from the reasons described about (2).

We prove $k>2$ with induction. Assume that the algorithm works in polynomial time for every $2 \leq k^{'} < k$, then a subdigraph $D$ has a $(m_1, m_2, \dots, m_k)$-partition if and only if:
\begin{enumerate}
  \setcounter{enumi}{0}
  \item $D$ has $k$ disjoint cycles $C_1, C_2, \dots, C_k$ each of size at least $m_1, m_2, \dots, m_k$ respectively.
  \item $D$ has a subdigraph $D^{'}_{i}$ such that $\delta (D^{'}_{i}) \geq 1,\  m_i \leq |D^{'}_{i}| \leq 2m_i -2$, and $D- D^{'}_{i}$ has a $(m_1, m_2, \dots, m_{i-1}, m_{i+1}, \dots, m_k)$-partition with out-degree of at least 1 for some $1\leq i\leq n$.
\end{enumerate}

Both statements can be checked in polynomial time: 
(1) For each combination of $m^{'}_{1},m^{'}_{2}, \dots,m^{'}_{k}$ such that $m_{i} \leq m^{'}_{i} \leq n$ and $\sum_{i=1}^{k} m^{'}_{i} \leq n$, we check if the union of $k$ disjoint cycles $C_{1}, C_{2}, \dots, C_{k}$ of sizes $m^{'}_{1},m^{'}_{2}, \dots,m^{'}_{k}$ is a subdigraph of $D$. There are less than $n^{k}$ possibilities for $m^{'}_{1},m^{'}_{2}, \dots,m^{'}_{k}$, and for each of these possibilities the check can be done in polynomial time as stated above.
(2) follows the same statements from the 2-partition case. For each $1\leq i\leq n$ there is a polynomial number of subdigraphs of size $m_{i} \leq |D^{'}_{i}| \leq 2m_i -2$. For each subdigraph $D^{'}_{i}$, checking if $\delta(D^{'}_{i})$ can be done in a fixed time, and checking if $(D-D^{'}_{i})$ has $(m_1, m_2, \dots, m_{i-1}, m_{i+1}, \dots, m_k)$-partition with out-degree of at least 1 is done in polynomial time by the induction assumption.
\end{proof}

\subsection{Egalitarian value of 2 in Symmetric Games}
We take a small break to talk about symmetric games and their matching representation - undirected graphs - in this section, as there are not many results on this subject.
For undirected graphs of minimum degree at least $k + 1$ we can check in polynomial time whether it has a $k$-partition of minimum degree at least 2 or not as follows:
First, check if the graph contains $k$ vertex-disjoint cycles, which can be done in polynomial time \cite{bodlaender1994disjoint}. 
If not, then the answer to the problem is also no, as every graph of minimum degree 2 has a cycle in it, and a $k$-partition of minimum degree at least 2 guarantees at least $k$ disjoint cycles. 
If the graph has $k$ disjoint cycles, the answer to the problem is also yes. We start by partitioning the graphs into the $k$ disjoint cycles and an `leftovers`. Any vertex which is not in one of the cycles is added to any one of them which contains at least 2 of its neighbours, arbitrarily. If after this process there are still some nodes left outside, it holds that each one of them either has at least 2 neighbours among them or 1 neighbour in each part, as well as the `leftovers`. In that case we take all of the `leftovers` and add them to one of the parts arbitrarily, as in the union each one of the `leftovers` is guaranteed to have at least 2 neighbours.

\subsection{No bound available}

We would like to bound the utilitarian social welfare of the maximum egalitarian social welfare $k$-CS to the maximum utilitarian social welfare possible. We show that this is not possible with an example of the worst case ratio. Optimally, the best utilitarian social welfare possible is $|E|$. It is clear that if we can obtain a $k$-CS with egalitarian value of at least 1, then the lower bound of the utilitarian social welfare is $n$, and if we cannot there is no reason to compare the two. 
We present an example where there is a $k$-cut with minimum degree 1 and total degree of exactly $n$, whereas the maximum total degree possible approaches $|E|$ the larger the graph: Let $n$ be an integer divisible by $k$, and denote $n/k$ as $m$. Let $D$ be a digraph with $n$ vertices numbered $a_1$ to $a_n$, where every vertex has $k$ out-edges pointing at the $k$ next vertices (i.e. vertex $a_1$ pointing at $a_2$ to $a_{k+1}$ and vertex $a_n$ pointing at $a_1$ to $a_k$). It is possible to obtain minimum degree $1$ by simply dividing the digraph into $k$ disjoint cycles $V_i = \{ a_i, a_{i + k}, a_{i + 2k}, \dots, a_{i + m*k}\}, \  i \in \{ 1, \dots, k\}$. We prove that this is the only $k$-cut that provides a minimum out-degree $1$. 

\begin{proposition}
Any subgraph in the above settings with minimum degree of at least 1 is of size at least $m$.
\end{proposition}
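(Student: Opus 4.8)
The plan is to translate the out-degree condition into a purely combinatorial statement about gaps along the cycle $a_1, a_2, \dots, a_n$. Let $S \subseteq \{a_1, \dots, a_n\}$ be the vertex set of a subdigraph whose minimum out-degree is at least $1$, and recall that in $D$ the only out-neighbours of $a_i$ are $a_{i+1}, \dots, a_{i+k}$ (indices taken modulo $n$). First I would observe that the requirement $\delta \geq 1$ for the subdigraph induced on $S$ is equivalent to the following local condition: for every $a_i \in S$, at least one of the $k$ forward neighbours $a_{i+1}, \dots, a_{i+k}$ also lies in $S$. Equivalently, starting from any vertex of $S$ and moving forward along the cycle, one must meet another vertex of $S$ within $k$ steps. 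Note this already forces $S$ to be nonempty (indeed $|S| \geq 2$), since a single vertex has out-degree $0$ in the induced subdigraph.

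Next I would make this precise with a gap argument. Write $S = \{a_{i_1}, a_{i_2}, \dots, a_{i_s}\}$ with the indices listed in increasing cyclic order and $s = |S|$. For each consecutive pair define the forward gap $g_j$ as the cyclic distance from $a_{i_j}$ to the next element $a_{i_{j+1}}$ of $S$, with the last gap wrapping from $a_{i_s}$ back to $a_{i_1}$. The local condition above says exactly that no forward gap exceeds $k$, i.e.\ $g_j \leq k$ for every $j$; otherwise the vertex $a_{i_j}$ would have none of its $k$ forward neighbours in $S$ and hence out-degree $0$, contradicting $\delta \geq 1$.

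Finally I would count. Since the gaps $g_1, \dots, g_s$ partition the whole cycle of length $n$, their sum is exactly $n$. Combining this with $g_j \leq k$ gives $n = \sum_{j=1}^{s} g_j \leq s \cdot k$, and therefore $s = |S| \geq n/k = m$, which is the claim.

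The step I expect to require the most care is the cyclic bookkeeping in the second paragraph: one must argue that ``the next element of $S$'' is well defined (which holds since $S$ is nonempty), and that the wrap-around gap is handled consistently so that the gaps genuinely sum to $n$ rather than to something smaller. Everything else reduces to the one-line averaging inequality in the final paragraph.
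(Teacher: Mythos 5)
Your proof is correct, and it takes a genuinely different route from the paper's. The paper argues by contradiction via cycle reachability: it invokes the fact that every digraph with minimum out-degree at least $1$ contains a directed cycle, observes that in this graph any cycle must wrap all the way around the $n$ vertices, and then counts that a subgraph with at most $m-1$ vertices can advance at most $(m-2)k = n-2k$ positions past its starting vertex, leaving it too far away to close the cycle. Your argument dispenses with the cycle-existence lemma entirely: you translate minimum out-degree $\geq 1$ into the local condition that every consecutive cyclic gap between elements of $S$ is at most $k$, and then the single averaging identity $n = \sum_j g_j \leq |S|\cdot k$ gives $|S| \geq n/k = m$ directly, with the wrap-around handled uniformly by the cyclic gap bookkeeping. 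Your approach buys two things: it is more elementary and tighter (the paper's displacement count is loose and stated only after an unexplained normalization ``assume $a_1 \in D'$,'' whereas your gaps sum exactly to $n$ with no case analysis), and its equality case does extra work for free --- if $|S| = m$ then all gaps must equal exactly $k$, which immediately yields the paper's follow-up claim (proved there by a separate contradiction argument) that the only subgraphs of size $m$ with minimum out-degree $1$ are the $k$ arithmetic-progression cycles $V_i$, i.e.\ no two vertices of $S$ can be at distance less than $k$. The one thing the paper's method retains is generality: the cycle-existence argument would still apply if the edge structure were perturbed, while your gap argument is tailored to the exact circulant structure of $D$; but for this proposition that structure is given, so nothing is lost.
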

\begin{proof}
For the proof we will use the fact that every digraph with a minimum out-degree of at least 1 has a cycle in it. Let $D^{'}$ be a subgraph with less than $m$ vertices. For aesthetic reasons we assume $a_{1}$ is in $D^{'}$. In order to close a cycle, a vertex in the $k$ prior vertices to $a_{1}$ must be in the subgraph as well, i.e. one of $a_{n - k +1}, a_{n - k +2}, \dots, a_{n}$. Let us assume by contradiction that $D^{'}$ has a minimum degree of at least 1. Every vertex can reach the next $k$ vertices. Since we have at most $m-1$ vertices, and $m-2$ vertices other than the last vertex, we can reach at most $(m - 2)*k = n - 2k$ vertices after $a_{1}$. Hence the last vertex we can reach is $a_{n-2k + 1}$, which is too far from $a_{1}$ to close a cycle. Since every subgraph is of a size at least $m$ it holds that all subgraphs are of size exactly $m$. 
\end{proof}
Now we assume by contradiction that there are 2 vertices in the subgraph with a difference smaller than $k$. We follow the same arguments from the proof above, but now we have $m$ vertices. We know that 2 vertices are closer than $k$, so we gain the highest index from above plus less than $k$. At most this is $1 + n - 2k + (k - 1) = n - k$ which is still too far from $a_1$.

For a min $k$-cut we partition the graph into $m$ adjacent vertices: $V_i = \{ a_{i*k + 1}, a_{i*k + 2}, \dots, a_{i*k + m} \}$. In this cut every subgraph has $m-k$ vertices with all of their edges in the subgraph, and the last vertices have $k-1, k-2, \dots, 1, 0$. So we have a total degree of $k* [(m-k)k + (k-1)*k/2$. Even if we take out the $(k-1)*k/2$ part, we get to $k*(n-k^2) = n*k - k^3$. For every $\epsilon > 0$ there is no guarantee that $n*k-k^3 < k*(1+\epsilon)$.

\section{Heuristics and Experiments}
\subsection{Algorithms}

\begin{algorithm}
\caption{Simulated Annealing(CS)}\label{euclid1}
\begin{algorithmic}[1]
\STATE $\textit{BestCS} \gets \text{CS}, \textit{BestUtil} \gets \textit{util(CS)}$
\WHILE {$\textit{step < stepsLimist}$}
\STATE $\textit{SwapOrMove} = \textit{Random(\{Move,Swap\})}$
\IF {$\textit{SwapOrMove is Move} $}
\STATE $\textit{C1} \gets \textit{Random(CS) s.t. C1>MinSize}$
\STATE $\textit{C2} \gets \textit{Random(CS) s.t. C2<MaxSize}$
\STATE $a \gets \textit{Random(C1)}$
\STATE $C1 \gets C1\setminus \{a\}, C2 \gets C2\cup\{a\}$
\ELSIF{$\textit{SwapOrMove is Swap} $}
\STATE $\textit{C1, C2} \gets \textit{Random(CS)}$
\STATE $a \gets \textit{Random(C1)}, b \gets \textit{Random(C2)}$
\STATE $C1 \gets C1\setminus \{a\}\cup\{b\}, C2 \gets C2\setminus \{b\}\cup\{a\}$
\ENDIF
\STATE $\textit{CurrUtil} \gets \textit{util(CS)}$
\IF {$Accept(BestUtil, CurrUtil, temp) > Rand(0,1)$}
\STATE $\textit{BestCS} \gets \textit{CS}$
\ELSE 
\STATE $\textit{Revert(CS)}$
\ENDIF
\ENDWHILE
\RETURN {$bestCS$}
\end{algorithmic}
\end{algorithm}

For practical results, we show an algorithm that aims at maximizing the egalitarian social welfare while maintaining the utilitarian social welfare as high as possible. For the calculation of the optimal utilitarian value we use KaHIP, a graph partition solver, as it is considered one of the best current solvers \cite{bulucc2016recent}. 
We started by using a simulated annealing algorithm to try and maximize the egalitarian value. We used the utility function of $n * \delta - n_{\delta}$ where $n$ is the number of vertices, $\delta$ the current minimum out-degree and $n_{\delta}$ is the number of vertices of out-degree $\delta$, the temperature $0.8step/steplimit$ and the acceptance probability being $e^{(curr\_util - old\_util)/temp} $. The utility function ensures that a partition with a high minimum out-degree will have a higher value than one with a lower minimum out-degree, and also tries to minimize the number of vertices of the current minimum out-degree.
In order to improve the results obtained by the simulated annealing, we used the concept of leximin. It means that we try to minimize the number of people in the worst situation, then go to the next level and try to minimize that as well. Formally leximin is a lexicon order that works as follows: we say $(a_{1}, a_{2}, \dots, a_{n}) >_{leximin} (b_{1}, b_{2}, \dots, b_{n}) \iff a_{i} < b_{i} \ for\  the\  first\  i\  where\  a_{i}\neq b_{i}$. In our case it serves our purpose to take care of the worst students first. We compare two partitions by the leximin order of their list of out-degrees. The algorithm starts with a given partition and moves vertices between the subgraphs each iteration by local improvements: Look at two random coalitions and move the vertex which improves the leximin order the most. After we find no improvement for several iterations in a row we stop and start from another coalition structure. After a number of iterations 
decided preemptively, the algorithm outputs the best partition found.

\begin{algorithm}
\caption{Leximin(CS)}\label{euclid}
\begin{algorithmic}[1]
\STATE $\textit{BestCS} \gets \text{CS}$
\STATE $\textit{NoImprovementCounter} \gets \textit{0}$
\WHILE {$\textit{NoImprovementCounter} \leq \textit{Limit}$}
\STATE $improvementFlag \gets false$
\STATE $\textit{C1} \gets \textit{Random(CS) s.t. C1>MinSize}$
\STATE $\textit{C2} \gets \textit{Random(CS) s.t. C2<MaxSize}$
\FOR {$a \in \textit{C1}$}
\STATE $currCS \gets CS\setminus \{C1,C2\}\cup\{C1\setminus\{a\},C2\cup\{a\}\}$
\IF {$lex(currCS) > lex(bestCS)$}
\STATE $bestCS \gets currcS$
\STATE $improvementFlag \gets true$
\ENDIF
\ENDFOR
\IF {$improvementFlag$}
\STATE $NoImprovementCounter \gets 0$
\ELSE
\STATE $NoImprovementCounter++$
\ENDIF
\ENDWHILE
\RETURN {$bestCS$}
\end{algorithmic}
\end{algorithm}

\subsection{Results}
For the experiments we used real data from three schools as well as randomly generated graphs. In our experiments we checked both unweighted and weighted graphs using Borda-like weights for the preferences of the students, where a student's most preferred friend gets a weight of $m$ where $m$ is the number of friends allowed to be chosen, the next most preferred gets $m-1$ and so on. Here we focus on few of the results while there are more in the appendix showing the same tendency between algorithms.

Data from one school consisted of 127 students with 3 friends each. After running all of the algorithms on the graph as unweighted, we obtained the results shown in Table 1. As can be seen in the table, there's a big difference between the average degree obtained by KaHIP and by simply running Simulated Annealing or Leximin. As such, we tried running the Simulated Annealing and Leximin from different partitions obtained by KaHIP and got much better results. We can see that for the price of only one friendship (320 compared to 319) we can change the partition to ensure that every student gets at least one of her friends, which is a crucial difference.
The large difference in performance between running Simulated Annealing and Leximin from randomly generated partitions and from ones resulted by KaHIP was consistent through all experiments. This can be seen more clearly in Table 2 and Figure 1. In Table 2 we see the results of a second school which consists of 146 students. Each can rank up to five friends, with each friend's weight matching the friend's ranking. In Figure 1 we see the results on synthetically generated graphs with the same settings; a weighted graph where each node has five neighbours, uniformly generated. Each column represents the average of 100 random graphs, all of the same size matching the column. We see that as the graph is small the Simulated Annealing slightly outperforms the Leximin when it comes to the minimum weight, but the larger it is the worse it gets, while the Leximin is consistent. When it comes to the average weight KaHIP always dominates the other algorithms as expected, but the Leximin from KaHIP isn't so far behind, while it obviously by far outperforms the KaHIP's minimum weight performance. The surprising result is that by starting with KaHIP, the other algorithms not only improve their utilitarian value, but also their egalitarian value, as remarkably seen in Table 2.

\begin{table}
\begin{adjustbox}{width=0.48\textwidth}
\begin{tabular}{|l|l|l|l|l|}
\hline
                 & \textbf{Min Deg} & \textbf{Avg Deg} & \textbf{Total Deg} &\textbf{Gini} \\ \hline
\textbf{KaHIP}   & 0                & 2.52    &320     & 0.113                \\ \hline
\textbf{SA}      & 1                & 1.65   &210     & 0.133                \\ \hline
\textbf{SA\_KH}  & 1                & 2.38    &303     & 0.095                \\ \hline
\textbf{Lex}     & 1                & 1.85    &235   & 0.105                \\ \hline
\textbf{Lex\_KH} & 1                & 2.51    &319     & 0.096                \\ \hline
\end{tabular}
\end{adjustbox}

\caption{School 1 Results}
\begin{adjustbox}{width=0.48\textwidth}
\begin{tabular}{|l|l|l|l|l|}
\hline
                 & \textbf{Min Weight}& \textbf{Avg Weight} & \textbf{Total Weight} &\textbf{Gini} \\ \hline
\textbf{KaHIP}   & 0                & 11.50             & 1679  &0.084        \\ \hline
\textbf{SA}      & 2                & 5.0              & 730    &0.101           \\ \hline
\textbf{SA\_KH}  & 4                & 10.67             & 1558   &0.089             \\ \hline
\textbf{Lex}     & 1                & 7.26              & 1060  &0.100              \\ \hline
\textbf{Lex\_KH} & 4                & 11.27             & 1630  &0.076              \\ \hline
\end{tabular}
\end{adjustbox}

\caption{School 2 Results}
\label{my-label}
\end{table}

\begin{figure}
  \centering
  \begin{subfigure}[b]{0.8\linewidth}
    \includegraphics[width=\linewidth]{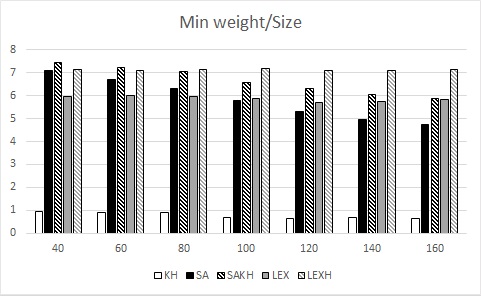}

  \end{subfigure}
  \begin{subfigure}[b]{0.8\linewidth}
    \includegraphics[width=\linewidth]{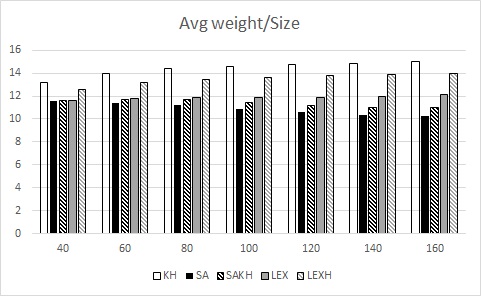}

  \end{subfigure}
  \begin{subfigure}[b]{0.8\linewidth}
    \includegraphics[width=\linewidth]{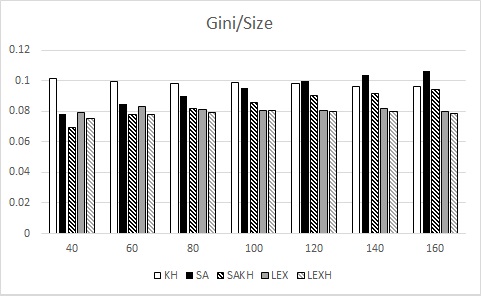}

  \end{subfigure}
  \caption{Performance of algorithms over graphs of different sizes. (a) Minimum Weight (b) Average Weight (c) Gini Coefficient}
  \label{fig:weighted size}
\end{figure}
As a measure of fairness we use the Gini Coefficient, a measure of statistical dispersion which in most cases represents the measurement of inequality when distributing wealth among agents. For our problem we can use it to estimate how fair a certain coalition structure is: the lower the Gini Coefficient, the more egalitarian and fair a distribution is. We see that Leximin consistently provides a low Gini Coefficient and as such results in fair coalition structures with minimal differences between students. It so happens that Leximin almost constantly results in the lowest Gini Coefficient when initiated with a KaHIP partition. We also see that when the Simulated Annealing performs very well it has a lower Gini than the Leximin, as we see in the case of graphs of size 40. On the contrary, when it performs poorly it also results in unfair coalitions, even worse than KaHIP, as we see in the case of graphs of size 140 or 160.

\section{Conclusion}
We analyzed the problem of coalition formation where a fixed number of equally sized coalitions must be formed with the goal of maximizing the egalitarian social welfare. We showed that the problem cannot be done in polynomial time, even when only 2 coalitions must be formed. We provided a polynomial time algorithm in the case of a fixed $k$ and checking for an egalitarian value of at least 1, and a similar result for an egalitarian value of at least 2 in the case of symmetric games. With respect to the utilitarian social welfare we provided some heuristics that try to maximize it as a secondary goal to the egalitarian social welfare. We showed that a bound between the utilitarian social welfare of the maximum egalitarian social welfare coalition structure and the maximum utilitarian possible does not exist, except for the trivial one.
We then showed the performance of three algorithms on real data and synthetic data. KaHIP had the best results with respect to the utilitarian value, as expected, but it ignored the egalitarian value. Simulated Annealing and Leximin had better results on the egalitarian value, and when initiated with partitions generated by KaHIP had even better ones, resulting in a minor loss to the utilitarian value and a significant gain to the egalitarian value.
Identifying a family of graphs on which the problems stated above are tractable or bounded by some factors is the subject of further research.

\clearpage
\small
\bibliographystyle{named}
\bibliography{dai_2019}

\clearpage
\appendix
\section{Charts of graphs}
In the appendix we show only the results of Simulated Annealing and Leximin initiated with partitions from KaHIP, as the ones initiated from random partitions are simply sub performing variations of these.
\begin{figure}[h]
  \centering
  \begin{subfigure}[b]{0.8\linewidth}
    \includegraphics[width=\linewidth]{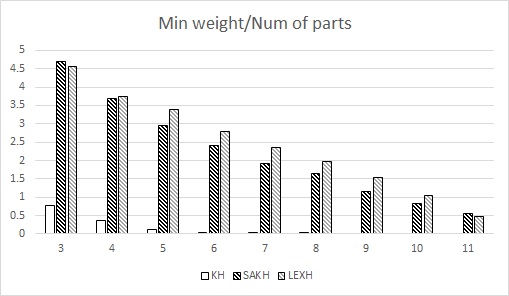}
     \caption{Minimum Weight}
  \end{subfigure}
  \begin{subfigure}[b]{0.8\linewidth}
    \includegraphics[width=\linewidth]{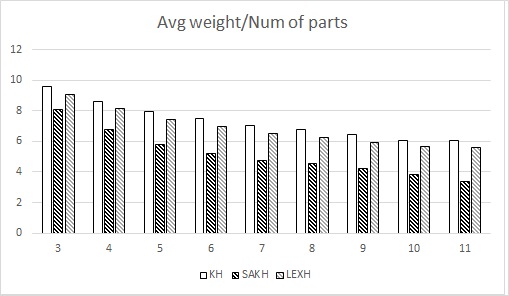}
    \caption{Average Weight}
  \end{subfigure}
  \begin{subfigure}[b]{0.8\linewidth}
    \includegraphics[width=\linewidth]{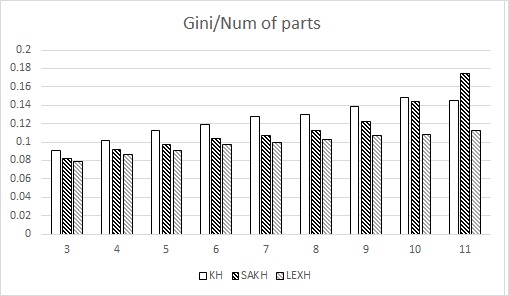}
    \caption{Gini Coefficient}
  \end{subfigure}
  \caption{Performance of algorithms over weighted graphs of size 100, out degree 5, with different number of partitions}
  \label{fig:weighted parts}
\end{figure}

\begin{table}[h]
\begin{adjustbox}{width=0.48\textwidth}
\begin{tabular}{|l|l|l|l|l|}
\hline
                 & \textbf{Min Weight}& \textbf{Avg Weight} & \textbf{Total Weight} &\textbf{Gini} \\ \hline
\textbf{KaHIP}   & 4                & 12.09             & 1318  &0.070        \\ \hline
\textbf{SA\_KH}  & 6                & 11.74             & 1280   &0.074             \\ \hline
\textbf{Lex\_KH} & 6                & 12.01             & 1310  &0.067     \\ \hline
\end{tabular}
\end{adjustbox}

\caption{School 3 Results. 109 Students, 5 friends each, partition to 5 classes.}
\label{school 3}
\end{table}

\begin{figure}
  \centering
  \begin{subfigure}[b]{0.8\linewidth}
    \includegraphics[width=\linewidth]{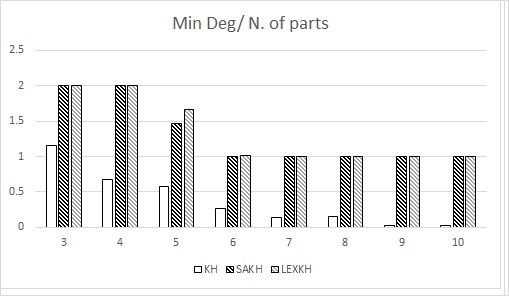}
     \caption{Minimum Degree}
  \end{subfigure}
  \begin{subfigure}[b]{0.8\linewidth}
    \includegraphics[width=\linewidth]{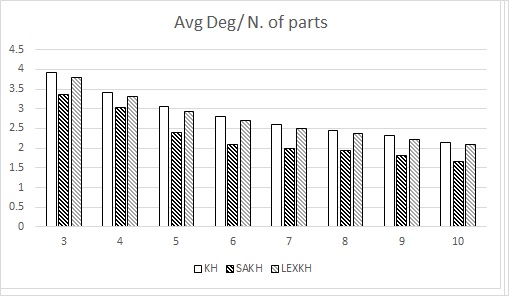}
    \caption{Average Degree}
  \end{subfigure}
  \begin{subfigure}[b]{0.8\linewidth}
    \includegraphics[width=\linewidth]{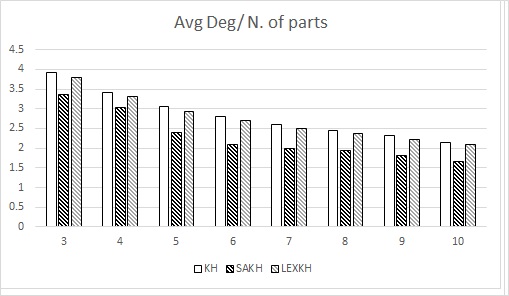}
    \caption{Gini Coefficient}
  \end{subfigure}
  \caption{Performance of algorithms over unweighted graphs of size 100, out degree 7, over different number of partitions}
  \label{fig:unweighted parts}
\end{figure}

\end{document}